\documentclass[11pt]{article}
\usepackage{setspace}

\usepackage{graphicx} 
\usepackage{amsmath,amssymb,amsthm}
\usepackage{color}
\usepackage[margin=1in]{geometry}
\usepackage{tikz}
\usetikzlibrary{arrows.meta, positioning}
\usepackage{pdflscape}
\usepackage[square,numbers]{natbib}
\usepackage{hyperref}
\usepackage{enumitem}
\usepackage{pgfplots}
\pgfplotsset{compat=newest}
\usepgfplotslibrary{fillbetween}

\newtheorem{theorem}{Theorem}
\newtheorem{proposition}[theorem]{Proposition}	
\newtheorem{corollary}[theorem]{Corollary}	
\newtheorem{assumption}[theorem]{Assumption}	
\newtheorem{definition}[theorem]{Definition}
\newtheorem{example}[theorem]{Example}
\newtheorem{remark}{Remark}

\numberwithin{equation}{section}	
\numberwithin{theorem}{section}

\newcommand{\R}{\mathbb{R}}

\usepackage[normalem]{ulem}

\title{Designing On-Chain Options: Amortizing Perpetual Options}
\author{
Maxim Bichuch\thanks{Department of Mathematics, SUNY at Buffalo, Buffalo, NY 14260, {\tt mbichuch@buffalo.edu}.}
\and 
Zachary Feinstein\thanks{School of Business, Stevens Institute of Technology, Hoboken, NJ 07030, {\tt  zfeinste@stevens.edu}. Portions of this work are the subject of a provisional U.S.\ patent application filed by the author.}
}
\date{\today}

\begin{document}

\maketitle
\begin{abstract}
Financial options are fundamental to traditional markets, enabling strategies ranging from hedging to speculating. Yet, while the Automated Market Maker paradigm has revolutionized decentralized spot markets, no equivalent standard has emerged for on-chain options. Typical designs attempt to replicate centralized exchange mechanics, requiring high-frequency oracles and robust liquidation engines which may fail during stress events. 
This paper presents a design for amortizing perpetual options tailored to the operational and adversarial constraints of blockchain environments. Leveraging this primitive, we introduce a decentralized market framework with minimal consistency requirements. We demonstrate that this contract functions as a foundational risk primitive for DeFi, enabling applications such as endogenous collateralization and explicitly priced de-peg insurance, thereby showing that this design provides a layer for mutualizing tail risk across protocols without reliance on centralized clearing institutions.
\end{abstract}

\section{Introduction}\label{sec:intro}
Financial options have long been a key component to the global financial system.
Despite this ubiquity in traditional financial markets, options have not yet been widely used in decentralized finance (DeFi); instead, for cryptocurrencies, the most liquid derivatives are perpetual futures~\cite[Table 1]{pennella2025unified}. However, though perpetual futures allow for leverage and synthetic shorting, the linear payoff of these instruments preclude them from being used for, e.g., effective hedging of volatility risks.
Furthermore, for the options that do exist, nearly 99\% are traded in centralized exchanges,\footnote{\url{https://insights.derive.xyz/announcing-lyra-v2/}} with Deribit's self-reported 85\% market share in BTC and ETH options.\footnote{\url{https://www.deribit.com/}} This is in contrast to spot markets in which decentralized exchanges regularly handle 15-20\% of all volume.\footnote{\url{https://www.theblock.co/data/decentralized-finance/dex-non-custodial/dex-to-cex-spot-trade-volume}} 

Given this divergence between the success of linear markets (i.e., spot trading and perpetual futures) and the failures of options on-chain, we refer to~\cite{fateh2025option,pennella2025unified} for recent surveys of the DeFi derivatives landscape. As noted by those works, there have been a number of approaches taken in practice to construct such markets including both peer-to-peer and peer-to-pool models.
Notably, as highlighted in \cite[Table 1]{fateh2025option} and \cite[Table 9]{pennella2025unified}, the vast majority of current protocols attempt to replicate the design paradigms of centralized exchanges. These protocols rely on high-frequency oracles and liquidation mechanisms to increase capital efficiency. This approach, however, conflicts with the latency constraints and limited liveness assumptions of decentralized blockchains, introducing attack vectors for manipulation and MEV. To implement these systems, often these protocols require app-chains or partially off-chain systems. 

However, an early proposed use of AMMs was to replicate \emph{concave} payoffs as the eponymous replicating market makers (RMMs)~\cite{angeris2023replicating}. As specifically designed AMMs, the option holder buys the position by providing liquidity to the RMM which is rebalanced to the appropriate payoff by arbitrageurs. In exchange for taking this concave payoff, the option holder is compensated in the form of transaction fees.
The valuation of AMM liquidity positions as derivatives has been studied in, e.g.,~\cite{bichuch2024defi}.
In particular, the concentrated liquidity positions of Uniswap v3~\cite{adams2021uniswap} can be viewed as approximations of \emph{short} (perpetual) put options with physical delivery; the tighter the position is concentrated around the strike price, the better this approximation of the payoff. This concept was implemented by Panoptic~\cite{lambert2022panoptic} which allows users to short these liquidity positions to effectively replicate long perpetual options. As with RMMs, the buyer needs to compensate the underwriter based on the fees in the underlying liquidity position to hold these options. Such explicit payments require robust margin accounts and liquidation of the position if these required installment payments are not made in full. Furthermore, because of the differences in underlying liquidity positions, these options are often considered only semi-fungible where a more concentrated and less concentrated option are \emph{not} directly exchangeable.

Within this work, we study the design of option contracts that are compatible with the operational and adversarial constraints of blockchains. We show that many existing approaches fail because they cannot simultaneously satisfy the requirements of autarkic settlement, asynchronous execution, fungibility, composability, and maintain a positive cost-of-carry. We demonstrate that \emph{amortizing perpetual options}, originally proposed in~\cite{feinstein2025amortizing}, resolve this design conflict within the space of non-expiring contracts (Section~\ref{sec:options}). Building on this primitive, we introduce a decentralized market design with minimal consistency requirements inspired by lending pools and prediction markets (Section~\ref{sec:design}). Finally, we demonstrate that this option contract can function as a general risk primitive for DeFi, e.g., enabling endogenous collateralization in lending markets and creating explicitly priced de-peg insurance for stablecoins and liquid staking tokens. In doing so, this primitive can act as a mutualization layer for tail risk across protocols without introducing centralized clearing institutions (Section~\ref{sec:apps}). Section~\ref{sec:conclusion} concludes.

\section{DeFi Option Design}\label{sec:options}

The translation of option contracts to a blockchain based setting requires a fundamental re-evaluation of the instrument's settlement logic. In traditional finance, option design is largely driven by the need for specialized position hedges. In DeFi, the design is constrained by the operational and security limitations of the blockchain. 
Within this section we consider five fundamental constraints on DeFi option design in Section~\ref{sec:options-axioms} and then construct a specific \emph{perpetual} option design satisfying all desired axioms in Section~\ref{sec:options-ampo}.

\begin{table}
\centering
\begin{tabular}{|p{0.17\textwidth}|p{0.17\textwidth}|p{0.6\textwidth}|}
\hline
\textbf{Axiom} & \textbf{Design \newline Requirement} & \textbf{Failure Mode if Violated} \\ \hline\hline
\textbf{Autarkic \newline Settlement} & \emph{Physical \newline Settlement} & Manipulable settlement due to reliance on external pricing signals (MEV/OEV). \\ \hline
\textbf{Asynchronous \newline Execution} & \emph{American \newline Option} & Gas competition, liveness failures, and failed settlement during execution windows.\\ \hline
\textbf{Fungible} & \emph{Exchange \newline Tradable} & Contracts that cannot be pooled, preventing shared liquidity and making scalable DeFi markets infeasible. \\ \hline
\textbf{Composable} & \emph{Self-Custodied} & Account-based or custodial positions that cannot be integrated across protocols. \\ \hline
\textbf{Positive \newline Cost-of-Carry} & \emph{Theta Decay or \newline Funding Rate} & Unbounded time value, eliminating economic incentives for liquidity providers to underwrite the convex risk. \\ \hline\hline
\textbf{Perpetual \newline Lifespan}$^*$ & \emph{No Expiry} & Downstream protocols must import complex, maturity-date logic. \\ \hline
\end{tabular}
\caption{Summary of axioms for decentralized options and their associated failure modes.\\ $^*$Perpetual lifespan is a design extension that improves composability with DeFi, but is not a requirement for the derivative construction.}
\label{table:axioms}
\end{table}

\subsection{Axiomatic Characterization}\label{sec:options-axioms}
Options on digital assets, whether centralized or decentralized, have predominantly followed the contract design of traditional derivatives. While small variations are introduced, e.g., instituting Asian options for cash settlement, the prevailing design principle involves trying to fit novel market mechanics around familiar option contracts. Herein, to establish a foundation for a natively on-chain option, we abstract away from specific legacy contracts and instead approach the design from first principles. Specifically, we consider how option designs must interact with the blockchain; we highlight that blockchains are characterized by asynchronous execution, trust-minimized environments, and permissionless integration. The blockchain, therefore, requires options that are engineered to withstand adversarial conditions while remaining economically sound for all market participants. Formally, these constraints impose a set of five fundamental axioms that any viable DeFi option must satisfy. Any contract that violates one or more of these axioms is either insecure, non-functional, or economically unsustainable in a blockchain environment.

\subsubsection{Autarkic Settlement}\label{sec:options-axioms-autarkic}
To avoid manipulation risks, option contracts should be self-contained and not rely on external smart contracts or oracles to determine the value to be exchanged at exercise. Because financial options are characterized by their \emph{convex} payoffs, which embed a natural leverage, these instruments are highly attractive targets for manipulation through, e.g., maximal extractable value (MEV) or oracle extractable value (OEV). To mitigate these vulnerabilities, prevailing markets for contracts on digital assets deploy Asian designs (e.g., Deribit and Derive); Asian contracts attempt to increase the cost of manipulations by referencing a time-weighted average price over, e.g., 30 or 60 minute intervals. However, this heuristic still relies on an external pricing signal which is only reliable for contracts on liquidly traded underlyings; when applied to the long tail of digital assets, with fragmented liquidity and low volumes, this averaging is still susceptible to manipulation. To eliminate any external manipulation risks, the option design must be fully autarkic.

While traditional markets follow a cash-settled paradigm in which the payoff is computed in some num\'eraire, this construction inherently requires an external oracle to compute the exercise value (e.g., $(S-K)^+$ for a call option). \textbf{Physical settlement} circumvents this requirement entirely. By mandating that the exerciser, e.g., deliver the strike asset in exchange for the underlying (for calls) or the underlying asset in exchange for the strike (for puts), the contract logic remains strictly internal. Because the market does not need to verify the current spot price of the underlying asset to process the exercise settlement, external oracle manipulations become structurally irrelevant to the settlement mechanics of the option. Instead, the holder will exercise if, and only if, he or she believes that doing so is individually advantageous whether or not the quoted external spot price supports that decision.

\subsubsection{Asynchronous Execution}\label{sec:options-axioms-async}
As noted above, the timing of the exercise event introduces operational bottlenecks unique to blockchains. Blockchains process transactions sequentially within discrete blocks, making them highly susceptible to network congestion and transaction reordering. Legacy contracts with \emph{European}-style expiry dictate a specific time that the contract must be exercised (if at all). In a blockchain context, this fixed maturity creates deterministic \emph{event cliffs}: all rational in-the-money actors must race to execute their transactions within an extremely narrow block window. This concentration of activity leads to gas wars, localized liveness failures, and targeted MEV attacks as arbitrageurs compete for block space to process exercise events.

To maintain operational resilience, the option contract must align with the asynchronous reality of the blockchain. \textbf{American options}, which permit the holder to exercise at any point prior to expiry, relax the temporal constraints on exercise. By granting the holder the flexibility to choose the block in which to exercise, American-style options can diffuse settlement activity across the lifespan of the contract. This continuous exercise window eliminates the single point of failure associated with deterministic maturities, ensuring that the network can process the exercise without creating systemic congestion or novel MEV attacks.

\subsubsection{Fungible}\label{sec:options-axioms-fungible}
In traditional financial markets, derivatives broadly divide into two categories: over-the-counter (OTC) contracts and exchange-traded instruments. OTC options allow for customized parameters, but they inherently lack secondary market liquidity because each contract is unique to the counterparties involved. Conversely, \textbf{exchange-traded options} standardize the contract parameters; in this way, these instruments are fungible by construction. This fungibility permits deep market liquidity, as it allows positions to be easily exchanged, netted, and pooled.

To be tradable in the standard DeFi peer-to-pool market model, the instrument must be fungible. As the nomenclature indicates, decentralized liquidity is structured around pooled capital; this is seen in, e.g., augmented bonding curves~\cite{zargham2020curved} for primary issuance or automated market makers (AMMs)~\cite{angeris2020improved,angeris2023geometry,bichuch2022axioms} for secondary trading. These pooled liquidity models mandate that the underlying positions are perfectly interchangeable; a unified pool cannot manage or price a collection of bespoke contracts. Therefore, for an on-chain option to support deep, pooled liquidity, it must strictly adhere to the exchange-tradable paradigm. That is, every unit of notional must be identical and interchangeable.

\subsubsection{Composable}\label{sec:options-axioms-composable}
In traditional finance, composability is achieved through centralized integration. Derivative positions are held within clearinghouses or prime brokerages, allowing them to be cross-margined alongside other assets within that specific institutional silo. However, this model relies on the institution maintaining custody of the assets and acting as the universal ledger. In contrast, decentralized finance operates across a fragmented ecosystem of permissionless, independent protocols without a central clearinghouse.
To achieve composability across this open architecture, an option contract must enable \textbf{self-custody}. This derivative position must be representable as a standardized, transferable token rather than a siloed account balance. This self-custodied token ensures that the option can be held directly in a user's wallet and integrated into distinct protocols, e.g., as collateral within a lending market (see Section~\ref{sec:apps-lending}).

\subsubsection{Positive Cost-of-Carry}\label{sec:options-axioms-coc}
At their core, financial options are instruments that transfer risk between counterparties: the holder secures a convex payoff, while the underwriter assumes the market exposure. Furthermore, to underwrite this contract, the liquidity provider must commit capital as collateral, thereby incurring an opportunity cost. For dated contracts, the underwriter is compensated both upfront by the option premium and continuously via the option's \textbf{theta decay}.

However, perpetual options introduce their own design constraint. A perpetual American option with no holding cost grants the holder indefinite optionality. This introduces a system in which no underwriter would lock up their liquidity for a ``reasonable'' cost, and no buyer would pay the ``fair'' (possibly infinite) price. To maintain finite premiums and attract sustainable liquidity, the instrument must enforce a positive cost-of-carry either implicitly (e.g., via the option theta) or explicitly (e.g., via a funding rate) to penalize indefinite holding.

\subsection{Decentralized Perpetual Options}\label{sec:options-dpo}
Reviewing the axioms established in Section~\ref{sec:options-axioms}, we note that \emph{dated, physically-settled American options} fulfill all design requirements. However, while dated options satisfy these fundamental requirements, the deterministic maturities introduce frictions when imposed on-chain. Within this section, we focus specifically on perpetual option designs. In doing so, we discuss why a perpetual lifespan is a practical necessity for DeFi composability (Section~\ref{sec:options-dpo-comp}), explore the failures of legacy perpetual contracts when measured against our axioms (Sections~\ref{sec:options-dpo-pao} and~\ref{sec:options-dpo-ci}), and finally introduce the \textbf{amortizing perpetual option} as a contract that satisfies all five axioms without imposing a maturity date.

\subsubsection{Composability Superiority of Perpetual Designs}\label{sec:options-dpo-comp}
While dated options can be tokenized and self-custodied, they introduce operational frictions when integrated into downstream DeFi protocols. Though a strict definition of composability simply requires tokenization of positions, a core principle of composability is that these instruments can be embedded in third-party protocols as a modular component, i.e., as collateral in a lending market. If a dated contract is used in this manner, the receiving protocol must explicitly account for the contract's expiration. In particular, because a dated option's value can collapse to zero at maturity, the downstream protocol is forced to implement liquidation engines that are time-aware to force users to either roll or close their positions prior to the expiry. Perpetual options resolve this friction entirely; by removing the maturity date, the option behaves operationally like a standard token. This allows the instrument to be natively integrated into any third-party protocol without requiring it to inherit the maturity logic.

\begin{remark}\label{rem:perpetual-liquidity}
Perpetual designs additionally resolve the structural liquidity fragmentation inherent to dated instruments. For dated options, liquidity is distributed across strikes and expiration dates; replicating this fragmentation in DeFi divides liquidity across isolated smart contracts, resulting in shallow pools and high slippage. A perpetual contract collapses the temporal dimension, consolidating all trading volume and underwriting capital for a given strike into a single liquidity pool. Furthermore, this design eliminates the persistent gas costs associated with actively rolling expiring positions.
\end{remark}

\subsubsection{Perpetual American Options}\label{sec:options-dpo-pao}
Given the aforementioned desire to move from dated, physically-settled American options to a contract with a perpetual lifespan, the na\"{i}ve first approach is to simply consider (physically-settled) perpetual American options (see, e.g.,~\cite[Chapter 26.2]{hull}). However, because this instrument never expires, it inherently lacks theta decay; without this natural time decay, the contract fails to have a cost-of-carry. 
That is, perpetual American options grant the holder indefinite optionality without any ongoing maintenance costs.
As discussed in Section~\ref{sec:options-axioms-coc}, with \emph{zero} cost-of-carry, the market equilibrium results in excessive premia required entice any underwriters, but simultaneously deterring buyers.
Therefore, while perpetual American options satisfy the operational constraints of the blockchain, they violate the positive cost-of-carry axiom, making them economically unviable.

\subsubsection{Perpetual Continuous-Installment (CI) Options}\label{sec:options-dpo-ci}
While standard perpetual American options (see, e.g.,~\cite[Chapter 26.2]{hull}) fail to satisfy the requisite cost-of-carry, perpetual continuous-installment (CI) options introduce an explicit funding rate $c_t$ to be paid from the holder to the underwriter throughout time to maintain optionality~\cite{ciurlia2009note,kimura2009american,kimura2010valuing}. That is, the buyer pays an initial premium $V_0 > 0$ to the underwriter for a claimable notional of $N_0 > 0$ upon exercise; to keep the CI contract alive, the buyer needs to pay continuous installments $c_t dt$ per unit of notional to the underwriter at all times $t$. Under this structure, the holder possesses two forms of optionality:
\begin{enumerate}
\item the right to exercise, and
\item the right to lapse by halting the installment payments.
\end{enumerate}
However, implementing this lapsing logic creates design issues in DeFi. 
Under traditional lapsing (in which the full notional $N_0$ is sustained, exercised, or lapses as a single unit), each holder's contract is \emph{non-fungible}; we note that, where CI options trade, they are over-the-counter instruments. 
Since the full notional lapsing causes this non-fungibility, we want to consider a CI-variant that permits \emph{partial} lapsing. Specifically, any fractional deficit in funding causes an equivalent fraction of the notional to lapse. However, due to the asynchronous execution of the blockchain, such a design would require account-based margining systems (similar to those implemented for perpetual futures) rather than a self-custodied token; as such these CI-variants would be \emph{non-composable}.

\subsubsection{Amortizing Perpetual Options}\label{sec:options-ampo}
To recover composability and fungibility simultaneously, and thus satisfy all five design requirements for blockchain-based options, herein we consider the amortizing perpetual option (AmPO) design of~\cite{feinstein2025amortizing}. AmPOs are a variant of CI options with implicit, rather than explicit, installment payments. That is, instead of directly paying the continuous installments, the option holder ``sells'' a fraction of his or her option holdings back to the underwriter at the current premium over time. Thus, the realized notional follows an exponential decay
\begin{equation}\label{eq:notional}
dN_t = -q_t N_t dt, \quad q_t = c_t / V_t
\end{equation}
for claimable notional $N_t > 0$, premium $V_t > 0$, and installment rate $c_t > 0$ at time $t$. 
In this way, rather than a physical transfer of payments from option holders to underwriters, AmPO holders compensate the underwriter indirectly via the decrease in his or her risk exposure.

\begin{definition}\label{defn:ampo}\cite[Definition 2.2]{feinstein2025amortizing}
An \textbf{amortizing perpetual option (AmPO)} is an implicit payment, perpetual American CI option with installment costs $c_t = q V_t$ for some progressively measurable amortization rate process $(q_t)_{t \geq 0}$ such that $q_t \geq 0$ a.s.\ and $\int_0^t q_s ds < \infty$ a.s.\ for every time $t \geq 0$. That is, for an AmPO with payoff function $\Pi: \R_+ \to \R_+$ on underlying $(S_t)_{t \geq 0}$ purchased at time $t_0$, the realized exercise value is $e^{-\int_{t_0}^\tau q_s ds} \Pi(S_\tau)$ with exercise time $\tau$.
\end{definition}

\begin{remark}
Fair valuation of call and put AmPOs goes beyond the scope of this work. We refer the interested reader to~\cite{feinstein2025amortizing} which proves that AmPOs can be priced as (vanilla) perpetual American options with both the risk-free rate and dividend rate augmented upward by the amortization rate.
\end{remark}

Notably, while Definition~\ref{defn:ampo} highlights that this amortizing option variant is amenable to any (suitably regular) installment scheme,~\cite[Section 4.1]{feinstein2025amortizing} proposed $c_t = q V_t$ as a solution to potential manipulations, e.g., by underwriters forcibly amortizing away their counterparties; this design choice ensures the option holder is subject to a fixed, predictable decay in notional $N_t$ over time via the amortization rate $q > 0$, while the underwriter is compensated directly in proportion to the fair-value exposure they carry. 

\begin{assumption}
For the remainder of this work, consider AmPOs with constant amortization rates $q_t \equiv q > 0$.
\end{assumption}

As noted above, we consider a physically-settled variant of call and put AmPOs. That is, for a call option with strike price $K > 0$ being exercised at time $\tau \geq 0$ after its inception, a user must deliver $K N_0 e^{-q\tau}$ units of cash (i.e., a stablecoin or some other num\'eraire) in order to receive $N_0 e^{-q\tau}$ shares of the underlying asset. Similarly, for a put option with the same strike and amortization rate, a user must deliver $N_0 e^{-q\tau}$ shares of the underlying asset for $K N_0 e^{-q\tau}$ units of cash. Because these exercise events do not rely on any external signals, and follow American-style timing, these physically-settled AmPOs are automatically autarkic and immune to oracle manipulations.

Consider now how to encode these AmPOs as fungible tokens on-chain. This token contract would need immutable parameters encoding the options series: type (call/put); strike ($K > 0$); and amortization rate ($q > 0$). In addition, any primary market for such options has two fundamental (mutable) state variables: the open interest \emph{notional} ($X \geq 0$) and the total market collateral ($C \geq 0$). For ease of discussion, herein we require that the market collateral must match the asset to be delivered, i.e., the underlying for calls and the num\'eraire for puts.
Notably, for put options, although the collateral is posted in the num\'eraire, the open-interest notional is defined in units of the underlying asset; accordingly, for accounting purposes, we implicitly convert the num\'eraire collateral $\bar{C}$ into notional-equivalent units via $C := \bar{C}/K$.
Furthermore, due to the amortization of assets, without any market operations, the open interest naturally follows the exponential decay of the option notional ($dX_t = -q X_t dt$). All option premia, rebates, and market-operation yields are paid in the num\'eraire asset.

\begin{remark}
For expositional simplicity, herein we will consider these AmPO tokens as \emph{negatively rebasing} tokens so that the notional and total open interest would be equivalent. However, in practice, we recommend considering a wrapped version of these tokens so that an internal accounting handles the rebasing rather than modifying users' token holdings. That is, a global pricing index $e^{-qt}$ can be used to reduce the value of AmPO tokens over time. 
These mechanisms are the inverse of those used in liquid staking protocols~\cite{gogol2024sok}, i.e., the positive rebasing found in, e.g., Lido or reward-bearing tokens found in, e.g., RocketPool.
\end{remark}

\section{Decentralized Market Design}\label{sec:design}
While Section~\ref{sec:options} introduced the AmPO contract as a perpetual derivative that satisfies the required axiomatic constraints of the blockchain, the viability of this contract in DeFi relies upon a decentralized primary market. Within this section, we propose a peer-to-pool market design for the underwriting and issuance of AmPOs. Herein, the underlying mathematics of the market are proposed (Section~\ref{sec:ampo-design}), the market operations are enumerated (Section~\ref{sec:ampo-market}), and the properties of this primary market are presented (Section~\ref{sec:ampo-properties}). 

\subsection{The Pricing Function}\label{sec:ampo-design}
Consider now a fixed AmPO series (type, strike $K$, and amortization rate $q$) with collateral held in the asset to be delivered from physical settlement as discussed in Section~\ref{sec:options}. To guarantee solvency of the pool, the open interest for this AmPO series must not exceed the total collateral held. In this way, we can define the option premium via the utilization $U := X/C \in [0,1]$ rather than as a bivariate function.\footnote{Recall, for put options, that the collateral posted in the num\'eraire asset is converted into notional-equivalent units using the strike $K$.} This construction follows the legacy of lending protocols which construct interest rate curves as explicit functions of the utilization~\cite{leshner2019compound,gudgeon2020defi}.
\begin{definition}\label{defn:premium}
A mapping $P: [0,1] \to \R_+ \cup \{\infty\}$ is called the \textbf{\emph{premium function}} if it is non-decreasing with $P(U) > 0$ for any $U > 0$ and
\[\lim_{U \nearrow 1} P(U) = \begin{cases} \infty &\text{for call options,} \\ K &\text{for put options with strike $K$}. \end{cases}\]
\end{definition}
\begin{remark}\label{rem:premium-limit}
The properties of the premium function encode the minimal set of assumptions needed for a meaningful option pricing. 
Monotonicity and $P(U) > 0$ for $U > 0$ enforce that the premium charged should be non-trivial and never decrease. That is, $P$ provides a meaningful supply curve for AmPOs.
The limiting behavior imposed, which differs for call and put options, tracks the supremum of the payoff for these two option types. By a simple no-arbitrage argument, since a put option has payoff $(K-S)^+ \leq K$, a rational actor would never pay more than the strike for this option; if $P(U) > K$ for some $U < 1$ then that would guarantee some collateral is idle and reduces the capital efficiency of the system (compare to, e.g., the idea of ``no wasted liquidity'' from \cite{bichuch2022axioms}).
Furthermore, because AmPOs are perpetual options, the premium function is time homogeneous and, therefore also, time independent thus simplifying the construction. 
\end{remark}

\begin{example}\label{ex:premium}
To make the premium function more concrete, we want to consider specific choices of such a function. 
\begin{itemize}
\item \textbf{Call option:} Consider the premium function $P_\text{call}(U) = 2U/(1-U)^3$ for $U \in [0,1)$ with $P_\text{call}(1) = \infty$. 
\item \textbf{Put option} with strike $K$: The linear premium function $P_\text{put}(U) = KU$ for $U \in [0,1]$ is a simple construction. 
\end{itemize}
\end{example}

While the premium function provides the cost of the ``next'' option (with marginal notional), the realized cost of an actual position is based on the integral of the premium function. We encode this in the net premium function below. Notably, as provided in Proposition~\ref{prop:net-premium}, there is an equivalence between the premium and net premium functions so that either can be treated as the primary driver of the options market. 

\begin{definition}\label{defn:net-premium}
Given a premium function $P$, the running integral mapping $\phi_P(U) := \int_0^U P(u)du$ is the \textbf{\emph{net premium function}}.
\end{definition}
\begin{proposition}\label{prop:net-premium}
Given a premium function $P$, the net premium function $\phi_P$ is differentiable a.e., convex, and strictly increasing with $\phi_P(0) = 0$.
Conversely, given a differentiable a.e., convex, and strictly increasing mapping $\phi: [0,1] \to \R_+ \cup \{\infty\}$ with $\phi(0) = 0$ and $$\lim_{U \nearrow 1} \phi'(U) = \begin{cases} \infty &\text{for call options}, \\ K &\text{for put options with strike $K$},\end{cases}$$ there exists a premium function $P_\phi(U) := \phi'(U)$.
Furthermore, under suitable continuity assumptions, this relation is one-to-one, i.e., $\phi_{P_\phi} \equiv \phi$ (if $\phi$ is differentiable) and $P_{\phi_P} \equiv P$ (if $P$ is continuous).
\end{proposition}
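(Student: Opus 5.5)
The plan is to treat the two directions separately, using only the standard facts about integrals of monotone functions and derivatives of convex functions.

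\textbf{Forward direction.} Let $P$ be a premium function.

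First, $\phi_P(0)=\int_0^0 P(u)\,du=0$ is immediate. The integral is well defined because $P$ is non-decreasing, hence Borel measurable and non-negative. On any $[0,U]$ with $U<1$ it is bounded by $P(U)<\infty$, so $\phi_P$ is finite on $[0,1)$. At $U=1$ the value may be $+\infty$ in the call case, which is allowed since $\phi_P$ maps into $\R_+\cup\{\infty\}$.

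For convexity, fix $U_1<U_2<U_3$ in $[0,1)$. Monotonicity of $P$ gives
\[
\frac{\phi_P(U_2)-\phi_P(U_1)}{U_2-U_1}\le P(U_2)\le \frac{\phi_P(U_3)-\phi_P(U_2)}{U_3-U_2}.
\]
This is the three-slope characterization of convexity. The endpoint $U=1$ is handled by monotone limits, with the convention that $+\infty$ is permitted.

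Differentiability a.e. then follows in either of two ways. A finite convex function on the open interval $(0,1)$ is differentiable except at countably many points. Alternatively, the Lebesgue differentiation theorem gives $\phi_P'(U)=P(U)$ at every continuity point of $P$, and a monotone function has only countably many discontinuities.

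For strict monotonicity, take $U_1<U_2$. Then
\[
\phi_P(U_2)-\phi_P(U_1)=\int_{U_1}^{U_2}P(u)\,du \ge (U_2-U_1)\,P\!\left(\tfrac{U_1+U_2}{2}\right)/2>0,
\]
since $P$ is non-decreasing and $P>0$ on $(0,1]$. This is the only place where the positivity assumption $P(U)>0$ for $U>0$ is used.

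\textbf{Converse direction.} Let $\phi$ be as in the statement and define $P_\phi:=\phi'$ wherever the derivative exists. Since $\phi$ is convex, I will extend $P_\phi$ to all of $[0,1]$ by the right derivative $\phi'_+$, which exists everywhere on $[0,1)$. Then:

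\begin{itemize}
\item $P_\phi$ is non-decreasing, because derivatives (and one-sided derivatives) of convex functions are monotone.
\item $P_\phi\ge 0$ on $(0,1)$, because $\phi$ is increasing.
\item $P_\phi(U)>0$ for $U>0$. Suppose instead $\phi'_+(U)=0$ for some $U>0$. By monotonicity $\phi'_+=0$ on $[0,U]$, which makes $\phi$ constant there and contradicts strict monotonicity.
\item The limit condition as $U\nearrow 1$ is exactly the hypothesis.
\end{itemize}

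Hence $P_\phi$ is a premium function in the sense of Definition~\ref{defn:premium}.

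\textbf{One-to-one correspondence.} If $\phi$ is differentiable and convex on $[0,U]$ with $U<1$, it is locally Lipschitz there. It is therefore absolutely continuous, and the fundamental theorem of calculus gives $\phi(U)=\phi(0)+\int_0^U\phi'(u)\,du=\phi_{P_\phi}(U)$. The value at $U=1$ follows by monotone convergence. Conversely, if $P$ is continuous, the fundamental theorem of calculus gives $\phi_P'=P$ everywhere, so $P_{\phi_P}\equiv P$.

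\textbf{Main obstacle.} The analysis itself is routine. The care lies in the technicalities: the extended value at $U=1$, and the choice of $P_\phi$ on the null set where $\phi'$ fails to exist. Fixing $P_\phi:=\phi'_+$ resolves the latter and also makes the reconstruction $\phi_{P_\phi}=\phi$ hold whenever $\phi$ is finite and convex. Near $U=1$ I would argue on each $[0,1-\epsilon]$ and then pass to the limit $\epsilon\to 0$.
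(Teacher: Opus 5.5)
Your proposal is correct and follows the paper's route. You verify the properties of $\phi_P$ directly from the monotonicity and positivity of $P$, derive the premium-function axioms for $\phi'$ from the convexity and strict monotonicity of $\phi$, and obtain the bijection from the fundamental theorem of calculus, supplying the details the paper dismisses as holding ``by construction.''

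Defining $P_\phi:=\phi'_+$ instead of $\phi'$ (which exists only a.e.) is a sensible refinement. Two small points remain. You should also set $P_\phi(1):=\lim_{U\nearrow 1}\phi'_+(U)$, as the paper does, since your right derivative leaves $U=1$ undefined. And recovering $\phi(1)$ itself requires left-continuity of $\phi$ at $1$, because a convex function may jump upward at the endpoint. That continuity is part of the statement's ``suitable continuity assumptions''; it does not follow from monotone convergence or from $\phi$ being finite and convex.
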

\begin{proof}
First, given a premium function $P$, the properties of the net premium function $\phi_P(U) := \int_0^U P(u)du,~U\in[0,1]$ follow directly by construction. 
Consider, now, a differentiable, convex, and strictly increasing mapping $\phi: [0,1] \to \R_+ \cup \{\infty\}$ with boundary conditions as in the statement of this proposition. Define $P_\phi(U) := \phi'(U),~U\in[0,1),$ and $P_\phi(1) :=\lim\limits_{U\nearrow 1} P_\phi(U)$. By monotonicity and convexity, it must follow that $P_\phi$ is non-decreasing and non-negative. In fact, by $\phi$ strictly increasing and convex, $P_\phi(U) = \phi'(U) > 0$ for any $U > 0$. 
The limiting behavior $\lim_{U \nearrow 1} P_\phi(U)$ follows directly by construction.
Finally, the one-to-one mapping between the premium function $P$ and the net premium function follows from the fundamental theorem of calculus.
\end{proof}

\begin{example}[Example~\ref{ex:premium} continued]\label{ex:net-premium}
Let the premium function $P$ be as Example~\ref{ex:premium}. Then the associated net premium functions are: 
\begin{itemize}
\item \textbf{Call option:} $\phi_\text{call}(U) = [U/(1-U)]^2$ is quadratic in the odds ratio. 
\item \textbf{Put option} with strike $K$: $\phi_\text{put}(U) = KU^2/2$ is quadratic.
\end{itemize}
\end{example}

The net premium function, either explicitly defined as the integral of the premium function over the utilization or defined as in Proposition~\ref{prop:net-premium}, can be used to define the actualized cost (or rebate) charged to market participants based on the actual open interest ($X$) and collateral ($C$). This is encoded in the total premium function, which acts in much the same way as the cost function in prediction markets~\cite{hanson2007logarithmic,chen2012utility}.
\begin{definition}\label{defn:total-premium}
Given a net premium function $\phi$, the \textbf{\emph{total premium function}} is the mapping 
\[\Phi(X,C) := \begin{cases} C \phi(X/C) &\text{if } X \leq C, \\ \infty &\text{else}.\end{cases}\] 
By continuity, we set $\Phi(0,0) = 0$.
\end{definition}
\begin{proposition}\label{prop:total-premium}
Let $\Phi$ be a total premium function. Then $\Phi$ is strictly increasing in its first component, non-increasing in its second component, convex, positive homogeneous, and differentiable a.e.
\end{proposition}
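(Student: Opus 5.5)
The plan is to recognize $\Phi$ as the \emph{perspective} of the net premium function $\phi$, extended by $+\infty$ outside the cone $\{0 \le X \le C\}$, and to read each property off the corresponding property of $\phi$ from Proposition~\ref{prop:net-premium}. I will check the properties in increasing order of difficulty: homogeneity, then monotonicity in each argument, then convexity, and finally a.e.\ differentiability.

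\textbf{Homogeneity.} For $\lambda>0$ and $X\le C$,
\[
\Phi(\lambda X,\lambda C)=\lambda C\,\phi(X/C)=\lambda\,\Phi(X,C).
\]
The region $X>C$ is invariant under scaling, and there both sides equal $\infty$. The convention $\Phi(0,0)=0$ is consistent with this.

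\textbf{Monotonicity in $X$.} For fixed $C>0$, the map $X\mapsto C\phi(X/C)$ is strictly increasing on $[0,C]$ because $\phi$ is. It then jumps to $\infty$ for $X>C$, so strict monotonicity persists on the extended real line.

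\textbf{Monotonicity in $C$.} For fixed $X$ and $C\ge X$, I will use the chord-slope characterization of convexity with $\phi(0)=0$. Convexity gives that $\phi(u)/u$ is non-decreasing in $u$. Hence
\[
C\phi(X/C)=X\cdot\frac{\phi(X/C)}{X/C}
\]
is non-increasing in $C$, since $X/C$ decreases as $C$ grows. For $X=0$ the value is identically $0$. Moving from $C<X$ to $C\ge X$ drops the value from $\infty$ to a finite one, which is again non-increasing.

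\textbf{Convexity.} This is the step I expect to need the most care, because of the extended values and the boundary $C=0$. On the open cone $C>0$, I will either cite the standard fact that the perspective of a convex function is convex, or give the direct argument. Take $(X_i,C_i)$ with $C_i>0$, weights $\lambda_i$, and set $C=\sum\lambda_iC_i$. Then
\[
\frac{X}{C}=\sum\frac{\lambda_iC_i}{C}\,\frac{X_i}{C_i},
\]
and Jensen's inequality for $\phi$ gives
\[
C\phi(X/C)\le\sum\lambda_iC_i\phi(X_i/C_i).
\]
The domain $\{0\le X\le C\}$ is a convex cone. On it, convex combinations stay feasible, and outside it the right-hand side is $\infty$. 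At the origin, homogeneity together with $\Phi\ge0$ and $\Phi(0,0)=0$ handles combinations involving $(0,0)$. Points $(0,C)$ simply give the value $0$. For the call case, where $\phi(1)$ may be $\infty$, the same inequality holds in the extended sense.

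\textbf{Differentiability a.e.} $\Phi$ is a composition of $\phi$ with the smooth map $(X,C)\mapsto X/C$, multiplied by $C$, on the interior $\{0<X<C\}$. Convexity of $\Phi$ on this open convex set gives differentiability Lebesgue-a.e.\ by Rademacher's theorem, since convex functions are locally Lipschitz. Alternatively, one can use that $\phi$ is differentiable off a null set, whose preimage under $X/C$ is null. The complement of the interior within the effective domain is a null set of boundary rays. This completes the plan.
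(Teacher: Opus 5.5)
Your proposal is correct and follows essentially the paper's route. Like the paper, you use strict monotonicity of $\phi$ for the first argument, direct scaling for positive homogeneity, and the perspective of $\phi$ for convexity. The one genuine variation is in the second argument: the paper differentiates in $C$ and uses $\phi(X/C)-\tfrac{X}{C}\phi'(X/C)\le 0$, which implicitly needs $\phi$ to be differentiable. Your chord-slope argument (that $\phi(u)/u$ is non-decreasing when $\phi(0)=0$) gives the same conclusion without that assumption, and your handling of the origin, the $\phi(1)=\infty$ call case, and a.e.\ differentiability (Rademacher, or null preimages under $X/C$) fills in what the paper dismisses as ``trivial.''
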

\begin{proof}
\begin{itemize}
\item \textbf{Strictly increasing in the first component:} This follows directly from strict monotonicity of the net premium function $\phi$.
\item \textbf{Non-increasing in the second component:} Fix $X \geq 0$. If $X = 0$ then $\Phi(0,C) = 0$ for any $C \geq 0$. Take $X > 0$ and assume $C \geq X$ (as monotonicity is trivial in $C < X$ and when crossing the $X = C$ boundary from the left). By convexity of the net premium function $\phi$, and using the fact that $\phi(0)=0$, the result follows by taking the derivative of $\Phi$ w.r.t.\ $C$:
\[\frac{\partial}{\partial C} \Phi(X,C) = \phi(X/C) - \frac{X}{C}\phi'(X/C) \leq 0.\]
\item \textbf{Convex:} This follows directly from $\Phi$ being the perspective of the net premium function $\phi$ (see, e.g.,~\cite[Chapter 3.2.6]{boyd2004convex}).
\item \textbf{Positive homogeneity:} Fix $X \geq 0$, $C \geq 0$, and $t > 0$. If $C = 0$ then $\Phi(tX,tC) = \Phi(tX,0) = t\Phi(X,0)$ as these are either all $0$ (if $X = 0$) or $\infty$ (if $X > 0$). Similarly, if $X > C > 0$ then $\Phi(tX,tC) = \infty = t\Phi(X,C)$. Finally, take $C > 0$ and $X \leq C$. Immediately, $\Phi(tX,tC) = tC\phi([tX]/[tC]) = tC\phi(X/C) = t\Phi(X,C)$.
\item \textbf{Differentiable:} This follows trivially.
\end{itemize}
\end{proof}

\begin{remark}\label{rem:premium-limit-2}
While the premium for call options limits to $\infty$ as the market reaches full utilization ($X \nearrow C$), the same need not hold for the total premium function. In particular, it is possible that $\Phi(t,t) < \infty$ for any $t \geq 0$. Under such a case, it is possible for an investor to push the marginal cost of an option to be infinite for a finite cost. To guarantee the collateral can never be fully utilized (guaranteeing a stronger version of solvency), we can impose the additional constraint that $\lim_{U \nearrow 1} \phi(U) = \infty$ as in Example~\ref{ex:net-premium}.
\end{remark}

\begin{example}[Example~\ref{ex:premium} continued]\label{ex:total-premium}
Let the net premium function $\phi$ be as Example~\ref{ex:net-premium}. Then the associated total premium functions are:
\begin{itemize}
\item \textbf{Call option:} $\Phi_\text{call}(X,C) = C[X/(C-X)]^2$ follows the quadratic of the ratio of open interest to idle collateral on $X < C$. 
\item \textbf{Put option} with strike $K$: $\Phi_\text{put}(X,C) = [KX^2]/[2C]$ is quadratic on $X \leq C$.
\end{itemize}
\end{example}

\subsection{Market Operations}\label{sec:ampo-market}

The total premium function $\Phi: \R^2_+ \to \R_+ \cup \{\infty\}$ can be used to define the cost (or rebate) of any market operation. Briefly, given current open interest notional $X \geq 0$ and collateral $C \geq 0$ and an operation that alters these market parameters $x \in [-X,\infty)$ and $c \in [-C,\infty)$ then, analogously to the cost function in prediction markets~\cite{hanson2007logarithmic,chen2012utility}, the total cost $\Delta\Phi$ of the operation is:
\begin{equation}\label{eq:cost}
\Delta\Phi(X,x,C,c) := \Phi(X+x,C+c) - \Phi(X,C).
\end{equation}
In the case that $\Delta\Phi(X,x,C,c) < 0$ then the market operation provides a rebate in the amount of $|\Delta\Phi(X,x,C,c)|$ to the trader instead of representing a cost to be paid. As will be formally proven in Section~\ref{sec:ampo-properties} below, the total premium held by the market at any time is $\Phi(X,C)$; in this way, all costs are paid to the market contract while rebates come explicitly from this contract. 
This cost function immediately maps to the four primary market operations for options, i.e., to buy/sell to open/close. Throughout this discussion we fix the market parameters $X \geq 0$ and $C \geq 0$.
\begin{itemize}
\item \textbf{Buy to Open:} When an investor wants to purchase $x \in [0,C-X]$ notional-units of options, he or she must pay $\Delta\Phi(X,x,C,0) \geq 0$ for these options.
\item \textbf{Sell to Close:} When an investor wants to sell back $x \in [0,X]$ notional-units of options, he or she receives $-\Delta\Phi(X,-x,C,0) \geq 0$ in num\'eraire tokens for burning these options and reducing pool utilization.
\item \textbf{Sell to Open:} When an investor wants to provide liquidity $c \geq 0$ to underwrite new options, he or she receives $-\Delta\Phi(X,0,C,c) \geq 0$ in num\'eraire tokens immediately for this underwriting. Notably, if $X = 0$ (i.e., for the initial market underwriter) then this rebate is 0; we will discuss below how such a liquidity provider can be compensated for his or her risk.
\item \textbf{Buy to Close:} When a liquidity provider wants to close his or her position $c \in [0,C-X]$, he or she must pay $\Delta\Phi(X,0,C,-c) \geq 0$ to limit the associated risk. We note that a liquidity provider is restricted to closing at most $C-X \geq 0$ options with finite cost even if they have provided a greater amount of collateral than that limit.
\end{itemize}

While all of the aforementioned four market operations have costs and rebates for individual investors, there are two more types of operations that distribute yield to all liquidity providers.
\begin{itemize}
\item \textbf{Exercise Yield:} When an option holder exercises $x \in [0,X]$ notional-units of options, he or she pays the requisite amount for the physical delivery of $x$ notional-units of collateral; this payment should be distributed proportionally to all underwriters. However, in addition to this explicit payment, the exercise event reduces the market's utilization as the open interest is decreased by the same amount as the collateral (i.e., $[X-x]/[C-x] \leq X/C$). In this way, an exercise rebate in the amount of $-\Delta\Phi(X,-x,C,-x) \geq 0$ is also proportionally distributed to the underwriters in compensation for the risk they have taken.
\item \textbf{Amortization Yield:} In addition to rebates earned from \emph{sell to open} and \emph{exercise} events, underwriters of AmPOs earn a continuous, passive yield from the amortization of all outstanding options. As described in Section~\ref{sec:options}, the cost of holding an AmPO is paid implicitly by the option holders; at every moment, a small fraction of their option's notional is amortized away. This decrease in open interest acts as a cash transfer to the underwriters. This process creates a continuous stream of value flowing from the collective of option holders to the liquidity providers who are underwriting the market's risk. This amortization yield forms a core and predictable component on an underwriter's total return. Notably, the higher the premium (and, therefore, also the risk for underwriters), the higher this continuous liquidity provider yield. Here, consider the open interest at time $t_0$ to be $X \geq 0$, then the proportionally distributed amortization yield, assessed at time $t \geq t_0$, is $-\Delta\Phi(X,-[1-e^{-q(t-t_0)}]X,C,0) \geq 0$.
\end{itemize}

\begin{remark}
As discussed with \emph{buy to close}, an option underwriter may be limited in being able to close their position if there is a sufficiently high open interest. However, due to the structure of the amortization yield, when an underwriter may be restricted from closing his or her position is exactly the same time when this passive yield is highest. That is, when the collateral $C$ is fixed, the yield is highest when the open interest $X \leq C$ is very close to $C$. 
\end{remark}

\begin{remark}
The yield for liquidity providers from \emph{sell to open}, \emph{exercise}, and \emph{amortization} is distinct from that earned by liquidity providers on a spot-market AMM. At these traditional AMMs, liquidity providers \emph{only} earn yield from transaction costs~\cite{angeris2020improved} while being subject to impermanent loss~\cite{xu2023sok} or loss-versus-rebalancing~\cite{milionis2022automated} on the underlying position. Herein we only consider these natural economic yields, though transaction costs can be included in the market mechanisms to further incentivize underwriters.
\end{remark}

\begin{remark}\label{rem:async}
To strictly satisfy asynchronicity as necessitated by the blockchain, the amortization yield is only computed upon some user interaction (market operations, exercise, or an explicit yield redemption). Crucially, because amortization follows the exponential function, the accumulated yield over any discrete time interval is path independent, i.e., the yield computed in a single atomic update after $\Delta t > 0$ time has passed is identical to the sum of yields computed over any partition of $\Delta t$ (assuming no other interaction has occurred). In this way, the frequency of interactions does not alter the resulting economic outcomes.  Notably, this property introduces a form of path independence in the time-dimension that is distinct from the market-state dimension that is commonly considered for such properties (see, e.g.,~\cite{schlegel2022axioms,bichuch2022axioms} for such a discussion with AMMs).
\end{remark}

\subsection{Properties}\label{sec:ampo-properties}
In Section~\ref{sec:options}, we discussed the AmPO contract and, with physical settlement, demonstrated its suitability for all five stated design constraints. We now wish to consider the mathematical properties of the market designed in Sections~\ref{sec:ampo-design} and~\ref{sec:ampo-market}. In particular, we will prove that the proposed market design guarantees solvency of the system as well as path independence for market operations. 
For simplicity of the discussion, throughout this section we consider Assumption~\ref{ass:blocks} to enforce conditions of the blockchain on the evolution of the market processes $(X_t)_{t \geq 0},(C_t)_{t \geq 0}$.

\begin{assumption}\label{ass:blocks}
Market events occur at discrete (potentially random) block times $(\tau_i)_{i \in \mathbb{N}}$ with $\tau_0 := 0$ and $\tau_i < \tau_{i+1}$ and market events are executed sequentially within each block. 
Furthermore, we impose the lazy amortization system described in Remark~\ref{rem:async} so that the market processes $(X_t)_{t \geq 0}$ and $(C_t)_{t \geq 0}$ are c\`adl\`ag step processes with jumps occurring the end-of-block open interest and collateral, respectively, at block times. 
In addition, we will assume that the market is initialized from $X_0 := 0$ and $C_0 := 0$.
\end{assumption}

Notably, though the market processes $(X_t)_{t \geq 0}$ and $(C_t)_{t \geq 0}$ are described as updating to the end-of-block values in Assumption~\ref{ass:blocks}, within a block multiple transactions can occur. In the following proposition, we prove an initial result towards path independence so that the market state after a block only depends on its end-of-block state.
\begin{proposition}\label{prop:intra-block}
Consider the blockchain system defined in Assumption~\ref{ass:blocks}. Consider block $i \in \mathbb{N}$ at time $\tau_i$. The market reserves $R_{\tau_i}$ at the end of the sequence of (countable) transactions within block $i$ is equal to $R_{\tau_{i-1}} + \Phi(X_{\tau_i},C_{\tau_i}) - \Phi(X_{\tau_{i-1}},C_{\tau_{i-1}})$.
\end{proposition}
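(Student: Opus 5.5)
The proof is a telescoping argument over the transactions inside block $i$. Let the block contain a (countable) sequence of transactions indexed $j = 1, 2, \dots$. Let $(X^{(j)}, C^{(j)})$ and $R^{(j)}$ denote the open interest, collateral and reserves immediately after the $j$-th transaction. Set $(X^{(0)}, C^{(0)}, R^{(0)}) := (X_{\tau_{i-1}}, C_{\tau_{i-1}}, R_{\tau_{i-1}})$, which is the state at the start of block $i$ by the c\`adl\`ag step structure in Assumption~\ref{ass:blocks}.

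The core step is to show that every admissible transaction changes the reserves by exactly $\Phi(X^{(j)},C^{(j)}) - \Phi(X^{(j-1)},C^{(j-1)})$. I would go through the operation list of Section~\ref{sec:ampo-market} case by case.

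For the four trading operations (buy/sell to open/close), the trader pays, or receives as a rebate, exactly $\Delta\Phi$ from \eqref{eq:cost}, and all such flows go to or from the market contract. So the reserves change by $\Delta\Phi(X^{(j-1)}, x, C^{(j-1)}, c)$, which is the desired difference of $\Phi$ values.

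The two yield operations need a little more care.

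\emph{Exercise.} The state moves $(X,C)\mapsto(X-x,C-x)$. The physical delivery is a swap between the exerciser and the collateral/underwriters, so it does not affect the premium reserves. The exercise rebate $-\Delta\Phi(X,-x,C,-x)$ is paid out of reserves to underwriters, so the reserves change by $\Phi(X-x,C-x)-\Phi(X,C)$.

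\emph{Amortization.} By the lazy-evaluation convention (Remark~\ref{rem:async}), amortization is crystallised upon interaction as a jump $X\mapsto e^{-q\Delta t}X$ with $C$ fixed. The distributed yield is exactly $-\Delta\Phi(X,-[1-e^{-q\Delta t}]X,C,0)$, so again the reserve change is a difference of $\Phi$ values. Path independence in time means this is consistent however the elapsed time is split.

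Summing the per-transaction identities then telescopes:
\[R^{(n)} = R^{(0)} + \sum_{j=1}^n \bigl[\Phi(X^{(j)},C^{(j)}) - \Phi(X^{(j-1)},C^{(j-1)})\bigr] = R^{(0)} + \Phi(X^{(n)},C^{(n)}) - \Phi(X^{(0)},C^{(0)}).\]
For a finite block this is the claim, because the end-of-block state is $(X_{\tau_i},C_{\tau_i})$. For a countable sequence I would pass to the limit, using the fact that the end-of-block state is by definition the limiting state. No infinite values arise, because every admissible operation keeps $X\le C$, which keeps $\Phi$ finite; in particular buy to close is restricted to $c\le C-X$.

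The main obstacle is bookkeeping rather than analysis. One must fix precisely what ``reserves'' means, namely the num\'eraire premium pool, excluding the collateral and the delivered strike or underlying. Only then is it clear that exercise deliveries and the proportional distribution of yields leave exactly a $\Delta\Phi$ footprint on $R$. Once that convention is stated, every case reduces to \eqref{eq:cost} and the rest is telescoping.
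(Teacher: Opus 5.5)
Your proposal is correct and is essentially the paper's proof: a telescoping sum of the per-operation inflows $\Delta\Phi$ from \eqref{eq:cost} over the block's operations, followed by a limit in the countable case. The only difference is that the paper takes ``each valid operation moves the reserves by $\Delta\Phi$'' as the definition of net inflows, whereas you check it for each of the six operation types; this usefully makes explicit that reserves exclude collateral and that lazy amortization enters as one such operation.

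One small correction: $X\le C$ alone does not keep $\Phi$ finite. For example, $\phi_{\text{call}}(1)=\infty$, so $\Phi(C,C)=\infty$. Admissibility should therefore explicitly require $\Phi(X^{(j)},C^{(j)})<\infty$, as the paper's footnote does. Also, like the paper, your passage to the limit tacitly uses continuity of $\Phi$ at the end-of-block state.
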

\begin{proof}
Fix block $i$ and assume there are $N \in \mathbb{N} \cup \{\infty\}$ (countable) operations within that block. Denote these \emph{valid} operations $(x_n,c_n)_{n = 1}^N$ and the resulting market parameters are $X_{\tau_i}^n = X_{\tau_i}^{n-1} + x_n \geq 0$ and $C_{\tau_i}^n = C_{\tau_i}^{n-1} + c_n \geq 0$ with $X_{\tau_i}^0 := X_{\tau_{i-1}}$ and $C_{\tau_i}^0 := C_{\tau_{i-1}}$.\footnote{Validity of the market operations is defined so that $X_{\tau_i}^n \geq 0$, $C_{\tau_i}^n \geq 0$, and $\Phi(X_{\tau_i}^n,C_{\tau_i}^n) < \infty$ for every operation $n = 1, ..., N$.} Furthermore, assume convergence of these operations to the end-of-block parameters $X_{\tau_i} = \lim_{n \nearrow N} X_{\tau_i}^n$ and $C_{\tau_i} = \lim_{n \nearrow N} C_{\tau_i}^n$. (In the case that $N < \infty$ then these limiting cases can be reduced directly to $X_{\tau_i}^N,C_{\tau_i}^N$ respectively.)
With this construction, the desired result follows from the telescoping sum based on the definition of the net inflows $\Delta\Phi$ as provided in~\eqref{eq:cost}:
\begin{align*}
R_{\tau_i} &= R_{\tau_{i-1}} + \sum_{n = 1}^N \Delta\Phi(X_{\tau_i}^{n-1},x_n,C_{\tau_i}^{n-1},c_n) \\
&= R_{\tau_{i-1}} + \sum_{n = 1}^N \left[\Phi(X_{\tau_i}^{n-1}+x_n,C_{\tau_i}^{n-1}+c_n) - \Phi(X_{\tau_i}^{n-1},C_{\tau_i}^{n-1})\right] \\
&= R_{\tau_{i-1}} + \sum_{n = 1}^N \left[\Phi(X_{\tau_i}^n,C_{\tau_i}^n) - \Phi(X_{\tau_i}^{n-1},C_{\tau_i}^{n-1})\right] \\
&= R_{\tau_{i-1}} + \lim_{n \nearrow N} \Phi(X_{\tau_i}^n,C_{\tau_i}^n) - \Phi(X_{\tau_i}^0,C_{\tau_i}^0)\\
&= R_{\tau_{i-1}} + \Phi(X_{\tau_i},C_{\tau_i}) - \Phi(X_{\tau_{i-1}},C_{\tau_{i-1}}).
\end{align*}
\end{proof}

Following Proposition~\ref{prop:intra-block}, the market system evolves solely based on its end-of-block values. We now expand that intra-block result to consider the inter-block reserves.
\begin{corollary}\label{cor:inter-block}
Consider the market processes $(X_t)_{t \geq 0}$ and $(C_t)_{t \geq 0}$ in open interest and collateral, respectively, satisfying Assumption~\ref{ass:blocks} and with $R_0 := 0$ in initial reserves used to pay out any obligations. 
Then path independence holds, i.e., $R_{t_1} - R_{t_0} = \Phi(X_{t_1},C_{t_1}) - \Phi(X_{t_0},C_{t_0})$ for any times $0 \leq t_0 < t_1$ without dependence on the realized path from $(X_{t_0},C_{t_0})$ to $(X_{t_1},C_{t_1})$.
In particular, at any time $t \geq 0$, the capital reserves held by the market contract (i.e., the cumulative net inflow of funds to the contract) satisfies $R_t = \Phi(X_t,C_t)$.
\end{corollary}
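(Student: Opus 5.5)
The plan is to reduce everything to Proposition~\ref{prop:intra-block} by telescoping over blocks, exploiting that under Assumption~\ref{ass:blocks} the processes $X$, $C$ and $R$ are c\`adl\`ag step processes that only change at block times.

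\textbf{Step 1 (flat between blocks).} For $t \in [\tau_{i-1},\tau_i)$, no interaction has occurred since the end of block $i-1$. The lazy amortization convention of Remark~\ref{rem:async} applies here: the decay $dX = -qX\,dt$ and the associated amortization yield are booked only at the next interaction, and then as a single atomic update. Hence on this interval $X_t = X_{\tau_{i-1}}$, $C_t = C_{\tau_{i-1}}$, and $R_t = R_{\tau_{i-1}}$. Two points need care:
\begin{itemize}
\item the amortization update at block $i$ is itself a state change $(x,c) = (-(1-e^{-q\Delta t})X,\,0)$;
\item its yield is paid out of reserves as $-\Delta\Phi$, so it is just one more operation in the intra-block sequence.
\end{itemize}
Exercise events fit the same template, with $(x,c)=(-x,-x)$. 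The physically delivered assets and the collateral returned to exercisers are tracked separately from the num\'eraire premium reserve $R$. I would state this accounting convention explicitly, since it is what makes Proposition~\ref{prop:intra-block} applicable to every operation.

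\textbf{Step 2 (telescoping across blocks).} Fix $0 \le t_0 < t_1$ and let $i_0 = \max\{i : \tau_i \le t_0\}$ and $i_1 = \max\{i : \tau_i \le t_1\}$. These indices are well defined because $\tau_0 = 0$ and the block times are strictly increasing. I will assume the block times do not accumulate on $[0,t_1]$, so that only finitely many blocks are involved; this seems implicit in Assumption~\ref{ass:blocks}. By Step 1, $R_{t_j} = R_{\tau_{i_j}}$ and likewise for $X$ and $C$. Applying Proposition~\ref{prop:intra-block} to each block $i = i_0+1,\dots,i_1$ and summing gives
\[
R_{\tau_{i_1}} - R_{\tau_{i_0}} = \sum_{i=i_0+1}^{i_1}\left[\Phi(X_{\tau_i},C_{\tau_i}) - \Phi(X_{\tau_{i-1}},C_{\tau_{i-1}})\right] = \Phi(X_{t_1},C_{t_1}) - \Phi(X_{t_0},C_{t_0}).
\]
The right-hand side depends only on the endpoint states, which is exactly path independence. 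If $i_0 = i_1$ the sum is empty and both sides vanish.

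\textbf{Step 3 (the identity $R_t = \Phi(X_t,C_t)$).} Take $t_0 = 0$. Then $R_0 = 0$ and $\Phi(X_0,C_0) = \Phi(0,0) = 0$ by the convention in Definition~\ref{defn:total-premium}, so $R_t = \Phi(X_t,C_t)$.

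\textbf{Main obstacle.} The delicate step is Step 1: justifying that nothing happens to $R$ between blocks and that amortization and exercise are legitimately instances of the cost rule~\eqref{eq:cost}. I would handle this by appealing to Remark~\ref{rem:async}. Because the amortization yield computed over an interval is independent of how that interval is partitioned, a single atomic update at the next block is equivalent to any finer bookkeeping, so treating it as one operation in the block loses nothing. Finiteness of $\Phi$ along the path follows from the validity of the operations, as in the footnote to Proposition~\ref{prop:intra-block}, so no $\infty-\infty$ terms arise in the telescoping.
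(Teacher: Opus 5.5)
Your proposal is correct and follows essentially the same route as the paper. Both arguments identify $R$, $X$, $C$ at $t_0,t_1$ with their values at the last block times $\max\{i : \tau_i \le t_k\}$ via the step-process structure, telescope Proposition~\ref{prop:intra-block} over the intervening blocks, and set $t_0=0$ using $R_0=0=\Phi(0,0)$. Your extra remarks make explicit what the paper leaves implicit, for instance in its use of that maximum: treating amortization and exercise as ordinary operations under~\eqref{eq:cost}, assuming block times do not accumulate, and noting that $\Phi$ stays finite along valid paths.
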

\begin{proof}
Fix times $t_1 > t_0 \geq 0$. Take $N_1 = \max\{i \in \mathbb{N} \; | \; \tau_i \leq t_1\}$ and $N_0 = \max\{i \in \mathbb{N} \; | \; \tau_i \leq t_0\}$. Recall that $\tau_0 := 0$ for simplicity of notation. If $N_1 = 0$ then the result follows directly from initialization of the processes. Assume $N_1 \geq 1$. Since the system follows a step process, it trivially follows that $R_{t_k} = R_{\tau_{N_k}}$, $X_{t_k} = X_{\tau_{N_k}}$, $C_{t_k} = C_{\tau_{N_k}}$ for $k \in \{0,1\}$. Following the recursive construction in Proposition~\ref{prop:intra-block}, the result trivially follows from the telescoping sum:
\begin{align*}
R_{t_1} &= R_{\tau_{N_1-1}} + \Phi(X_{\tau_{N_1}},C_{\tau_{N_1}}) - \Phi(X_{\tau_{N_1-1}},C_{\tau_{N_1-1}}) = \cdots \\
&= R_{\tau_{N_0}} + \sum_{i = N_0+1}^{N_1} \left[\Phi(X_{\tau_i},C_{\tau_i}) - \Phi(X_{\tau_{i-1}},C_{\tau_{i-1}})\right] \\
&= R_{\tau_{N_0}} + \left[\Phi(X_{\tau_{N_1}},C_{\tau_{N_1}}) - \Phi(X_{\tau_{N_0}},C_{\tau_{N_0}})\right] = R_{t_0} + \left[\Phi(X_{t_1},C_{t_1}) - \Phi(X_{t_0},C_{t_0})\right].
\end{align*}
Finally, $R_t = \Phi(X_t,C_t)$ can be seen by considering $t_1 = t$ and $t_0 = 0$, noting that $R_0 = 0 = \Phi(0,0)$.
\end{proof}

\begin{remark}
As highlighted in, e.g.,~\cite{bichuch2022axioms}, path independence implies strong no-arbitrage conditions. For instance, under path independence, there does not exist a round-trip trade in which an investor can extract positive value from the system (e.g., buying then selling an option leads to zero net-transfers). If fees are introduced, such a property results in strict losses for any attacker trying to make such a round-trip transaction.
\end{remark}

\begin{corollary}\label{cor:solvent+path}
Consider the setting of Assumption~\ref{ass:blocks}.
The market is solvent at all times $t \geq 0$, i.e., $R_t \geq 0$, and the reserves depend only on the current market state $(X_t,C_t)$.
\end{corollary}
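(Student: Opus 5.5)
The plan is to read this corollary directly off Corollary~\ref{cor:inter-block}, which under Assumption~\ref{ass:blocks} (with $R_0 := 0$ and $X_0 = C_0 = 0$) gives the identity $R_t = \Phi(X_t,C_t)$ for every $t \geq 0$. Both claims then become statements about the total premium function $\Phi$ evaluated at the current state.

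First, \emph{state dependence}. The right-hand side $\Phi(X_t,C_t)$ is a deterministic function of the current pair $(X_t,C_t)$ alone. So the reserves do not depend on the path of operations, the intra-block ordering, or the block times that led to that state. This is exactly the path independence already recorded in Corollary~\ref{cor:inter-block}, specialised to $t_0 = 0$.

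Second, \emph{solvency}. It remains to check that $\Phi(X,C) \geq 0$ on every reachable state. By the validity requirement in the proof of Proposition~\ref{prop:intra-block}, each operation keeps $X \geq 0$, $C \geq 0$ and $\Phi(X,C) < \infty$. In particular every reachable state has $X \leq C$, so $\Phi$ takes a finite value there. The cases are as follows:
\begin{itemize}
\item If $C = 0$, then finiteness forces $X = 0$, and $\Phi(0,0) = 0$ by convention.
\item If $C > 0$, then $\Phi(X,C) = C\,\phi(X/C)$ with $X/C \in [0,1]$.
\end{itemize}
In the second case, Proposition~\ref{prop:net-premium} gives $\phi(0) = 0$ and $\phi$ strictly increasing, so $\phi(U) \geq 0$ on $[0,1]$. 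Equivalently, $\phi(U) = \int_0^U P(u)\,du \geq 0$ because $P$ maps into $\R_+$. Hence $\Phi(X,C) \geq 0$, and therefore $R_t \geq 0$.

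The step needing the most care is making sure the reserve process really equals $\Phi$ after every kind of state change. In particular, the exercise and amortization yields must be booked as $\Delta\Phi$ transfers, so the telescoping identity covers them too. I would handle this by noting the following:
\begin{itemize}
\item Exercise is the operation $(x,c) = (-x,-x)$, with its rebate $-\Delta\Phi$ paid out of the reserves.
\item The lazy amortization of Remark~\ref{rem:async} is the operation $(-[1-e^{-q\Delta t}]X,\,0)$, applied at the next interaction with its yield likewise paid out.
\end{itemize}
Both are therefore of the form \eqref{eq:cost} and are included in the telescoping sums of Proposition~\ref{prop:intra-block} and Corollary~\ref{cor:inter-block}. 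The physical delivery in an exercise is a separate transfer in the collateral asset, and it is matched one-for-one by the reduction in $C$. Solvency in the collateral asset ($X \leq C$) is likewise enforced by validity.
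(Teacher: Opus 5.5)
Your proposal is correct and follows the paper's argument: the paper likewise reads $R_t = \Phi(X_t,C_t)$ off Corollary~\ref{cor:inter-block} and concludes $R_t \geq 0$ because $\Phi$ is nonnegative ``by construction.'' Your case split ($C=0$ forcing $X=0$, and $C>0$ with $\phi \geq 0$ on $[0,1]$), together with your check that the exercise and amortization yields are $\Delta\Phi$-type operations, only makes explicit what the paper leaves implicit.
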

\begin{proof}
These results follow from a direct application of Corollary~\ref{cor:inter-block} as $R_t = \Phi(X_t,C_t) \geq 0$ by construction.
\end{proof}

\begin{remark}
In Corollary~\ref{cor:solvent+path} we express solvency solely in terms of being able to pay out any market-based obligations from the market reserves. Full contractual solvency, which additionally requires the ability to make the necessary payments at option exercise, is guaranteed by the physical settlement and Exercise Yield procedures along with the construction so that the open interest is always bounded by the collateral, i.e., $X \leq C$.
\end{remark}

\section{Upgrading the DeFi Stack}\label{sec:apps}
\subsection{Protective Puts and Endogenous Haircuts in Lending Markets}\label{sec:apps-lending}

\begin{figure}
\centering
\begin{tikzpicture}
\begin{axis}[
    xlabel = {Time $t$ (days)},
    xmin = 0, xmax = 10,
    ymin = 0.995, ymax = 1.025,
    domain = 0:10,
    samples = 200,
]
\addplot [red,ultra thick,name path=debt] {exp(0.000137*x)}; 
\addplot [black,ultra thick,name path=liq] {1.01*exp(0.000137*x)}; 
\addplot [blue,ultra thick,name path=collat] {1.02*exp(-0.003*x)}; 
\addplot [black,dotted,ultra thick] coordinates {(3.14067,0.985) (3.14067,1.03)};
\addplot [black,dotted,ultra thick] coordinates {(6.3126,0.985) (6.3126,1.03)};
\path [name path=top103] (0,1.03) -- (10,1.03);
\path [name path=bot0985] (0,0.985) -- (10,0.985);
\addplot [green!20!white] fill between[of=bot0985 and top103, soft clip={domain=0:3.14067}];
\addplot [yellow!20!white] fill between[of=bot0985 and top103, soft clip={domain=3.14067:6.3126}];
\addplot [red!20!white] fill between[of=bot0985 and top103, soft clip={domain=6.3126:10}];
\end{axis}
\end{tikzpicture}
\caption{Visualization of liquidation events due to amortization. The blue line is 30bps/day amortizing collateral, the black line is the liquidation threshold for the loan growing at 5\% annualized, and the red line is the actual obliged debt growing at 5\% annualized. The green region indicates no liquidations, the yellow region is the region with safe liquidations, and the red region (if reached) is bad debt.}
\label{fig:liquidation-event}
\end{figure}

Lending markets, such as Aave, Compound, and Morpho, rival liquid staking protocols for the largest sector of DeFi by total value locked in late 2025.\footnote{\url{https://defillama.com/}} Largely, these markets all function in the same way. As summarized in, e.g.,~\cite{gudgeon2020defi}, liquidity providers lend out cryptocurrencies (e.g., USDC) to borrowers who overcollateralize their positions with other tokens (e.g., ETH). If the loan-to-value ratio exceeds a governance set threshold, liquidators can claim the collateral in exchange for closing out the loan. So long as the liquidators can act fast enough and the market remains sufficiently liquid, the liquidators are able to make a profit while also keeping the lenders whole. However, because the loan-to-value threshold is exogenous and slow to adapt, bad debt can accumulate when no liquidator can profitably act.

This reliance on fixed, exogenous haircuts for different cryptocurrencies (i.e., the loan-to-value thresholds) leads to systemic fragility. If the haircut is too large, then the lending market becomes excessively capital inefficient; if the haircut is too small, then the lenders are subject to the risk of bad debt. Furthermore, because the value of the collateral requires external oracles, the loan-to-value ratio can be temporarily manipulated leading to novel attacks (see, e.g.,~\cite{qin2021attacking} which describes one such event from February 18, 2020). Herein we investigate how AmPOs permit an alternative collateral system in which borrowers, effectively, pre-fund their own liquidation by pairing their collateral with put options.

For simplicity of exposition, consider a user who is borrowing USDC from Aave using WETH as collateral. As of this writing, this user can borrow 80.5\% of the value of the WETH and will be liquidated if the loan-to-value ratio exceeds 83\% at Aave v3.\footnote{\url{https://app.aave.com/markets/}}
Consider, alternatively, utilizing an AmPO protective put position with strike $K$ USDC and amortization rate $q > 0$ as collateral to borrow up to $K$ USDC per unit of collateral without introducing WETH-based risks. 
That is, the user supplies the original $E > 0$ tokens of WETH coupled with $\alpha E$ put options for $\alpha > 1$. Over time, due to the amortization, the protection decays following the negative exponential $t \mapsto \alpha e^{-qt}E$. 
In this way, the collateral can be valued based solely on the option notional $\alpha e^{-qt}EK$ USDC as the exercise value of this protective put is always $\min\{\alpha e^{-qt},1\} EK$. As this collateral value is decreasing predictably over time, and the loan value is increasing due to the charged interest, the standard liquidation procedure still occurs if the loan-to-value $(\alpha e^{-qt}E)^{-1}$ gets sufficiently close to 1.
A stylized visualization of the possible state of the collateral is provided in Figure~\ref{fig:liquidation-event}. 

The use of protective put collateral resolves three fundamental vulnerabilities inherent in current liquidation mechanisms. First, it eliminates the price impacts triggered by liquidation events; rather than selling the seized collateral in the spot market, the physically-settled put options guarantee a swap at the strike price $K$ without touching the underlying market. In this way, the protective put structure pre-funds the liquidation, protecting lenders from the market risk of a failed liquidation. Second, this collateral achieves oracle independence because the value of the protective put can be determined solely by the exponential decay of the option's notional without reference to the spot price. In doing so, borrowers and lenders are insulated from oracle manipulations and flash-crash liquidations. Third, the loan-to-value ratio threshold becomes endogenously priced by market supply and demand via the option premium instead of exogenously determined by governance votes (e.g., the current 19.5\% haircut for WETH). Taken together, protective put collateral constructs a system in which lenders are \emph{not} subject to market or oracle risks related to the collateral. 

\begin{remark}
While AmPO underwriting is naturally capital inefficient, requiring full collateralization of every option underwritten, the composability with lending protocols makes token issuers natural underwriters for AmPO markets. That is, a DAO can use a fraction of its (stablecoin) treasury in order to underwrite AmPOs on, e.g., its governance token. Doing so provides four distinct structural benefits for the DAO:
    \begin{enumerate}
    \item it creates immediate utility for the underlying token by enabling its use as viable collateral (in contrast to the near 100\% haircuts traditionally imposed on long-tail assets);
    \item the ability to borrow against the underlying token reduces selling pressure, providing an alternative means for holders to access liquidity;
    \item the chosen strike price acts as an explicit reserve price for the token, replacing the need for ad-hoc governance interventions during market stress events; and
    \item the underwriting position generates yield for the DAO's treasury via the amortization rate.
    \end{enumerate}
In this way, the DAO internalizes the cost of capital inefficiency as a necessary operational expense to defend and empower its own ecosystem.
\end{remark}

\subsection{De-Peg Insurance: Stablecoins and Liquid Staking}\label{sec:apps-stable}

Stablecoins are, arguably, the most successful application in DeFi at present with over \$300 \emph{billion} in total stablecoins market cap as of late 2025.\footnote{\url{https://defillama.com/stablecoins}} Despite this success, there have been notable de-pegging events in recent years; the collapse of Terra UST led to a broad selloff that precipitated the years-long ``crypto winter''~\cite{oecd2022lessons} while USDC broke its peg in March 2023 due to the failure of Silicon Valley Bank~\cite{diop2024collapse}. 
In DeFi, holders of stablecoins often lack financial protection against these events. In comparison, in traditional finance, institutions buy put options to hedge against currency risks. While DeFi does offer insurance-like products at, e.g., Nexus Mutual, these products utilize a discretionary claims process~\cite{nexus} or external oracles subject to manipulation~\cite{eskandari2021sok}. As far as the authors are aware, no fungible and autarkic put option protection is available in DeFi as of this writing.

\begin{remark}
While distinct, liquid staking tokens serve a similar function to stablecoins in that they are meant to keep a peg to the underlying asset. For instance, stETH issued by Lido is meant to maintain a 1-to-1 peg to ETH while rebasing to provide staking rewards for investors. However, stETH will regularly lose its peg due to, e.g., a withdrawal queue from Ethereum staking~\cite{gogol2024sok}. Much like how the fiat-backed USDC can break its peg, analogous risks can exist with Lido due to, e.g., a hack that drains the smart contract of funds. Thus, though herein we will generally refer to stablecoins due to their market capitalization, the same de-peg insurance can be applied to liquid staking tokens.
\end{remark}

Though not an explicit put option, in December 2020, MakerDAO (now Sky) introduced the Peg Stability Module (PSM)~\cite{kozhan2021decentralized}. This mechanism allowed investors to freely convert DAI (and USDS) to USDC at parity; implicitly this acts as a put option \emph{without} a cost-of-carry as MakerDAO does not charge users based on the convexity of this protection. In this way, the cost of offering this option is being borne directly by the DAO and its equity holders.

Consider, now, a version of the PSM in which the cost-of-carry is imposed, i.e., the end-users pay for their stability protection. For simplicity of exposition, we will consider this setting with MakerDAO as an example. That is, consider a situation in which MakerDAO acts as a liquidity provider, depositing its reserves of USDC as collateral in an at-the-peg put option market with DAI as the underlying asset. Now, if a user wants to convert their DAI to USDC, they need to hold the put option and exercise it via the physical delivery mechanism. In this way, MakerDAO is able to be compensated for the service they are providing; in particular, this compensation is paid via the amortization yield earned by underwriters in the AmPO mechanism.
In fact, by making the PSM an explicit option, this protection can be underwritten by third-party investors as well. This decentralizes the insurance currently being offered by MakerDAO, potentially leading to greater liquidity and capital efficiency. 

Furthermore, this protection, being an explicit option, additionally provides useful forward information about the implicit risk to DAI by referencing the option premium. This premium provides an on-chain ``fear index'' for the stablecoin which is independent of any external oracles. This is in contrast to current DeFi practice where, e.g., the price of risk for UST was unknown until its peg was already irrevocably broken.

\subsection{Decentralized Clearinghouse}\label{sec:apps-clearinghouse}

Following the financial crisis of 2007–2009, regulators in Traditional Finance (TradFi)  moved toward clearing derivatives through Central Counterparty Clearing Houses (CCPs) to minimize credit risk and manage default losses~\cite{duffie2011does}. By acting as the counterparty in every transaction, CCPs facilitate novation, netting, and the mutualization of tail losses across their participating institutions. In contrast, DeFi operates on a fragmented architecture where every protocol functions as its own bespoke clearinghouse. Utilizing a peer-to-pool model, each protocol implements isolated collateral rules, liquidation logic, and risk buffers~\cite{gudgeon2020defi}.

Consequently, users are exposed to the specific default risk of every protocol they interact with. Because DeFi lacks a native mechanism for cross-protocol mutualization, these risks are siloed within individual smart contracts. For instance, during a systemic market shock, every independent lending protocol and stablecoin issuer can only rely on its network of liquidators and insurance fund to protect liquidity providers. This means that these systemic risks are duplicated across the ecosystem while, simultaneously, being inconsistently priced~\cite{angeris2020improved}. While the TradFi solution of a CCP offers default management and loss mutualization, it inherently contradicts the core DeFi principles of distribution and decentralization~\cite{pirrong2011economics}.

AmPOs offer a novel, decentralized solution to this dilemma~\cite{feinstein2025amortizing}. By offering a standardized derivative contract that covers a diverse range of protocol risks, AmPOs act as a universal risk transfer mechanism. In particular, if multiple protocols utilize AmPO-based structures (such as lending markets and stablecoin issuers), then liquidation risk, de-peg risk, and solvency backstops are all translated into the same option primitive. 
In this way, the convex tail risk is absorbed by underwriters in a global, shared pool of liquidity. That is, by facilitating risk transfer to these liquidity providers, AmPOs allow risk to be aggregated and mutualized across distinct use cases and protocols. Furthermore, as tradable derivatives, the convex tail risk becomes explicitly priced, fungible, and transferable across the blockchain. Ultimately, this approach achieves the mutualization benefits of a clearinghouse without introducing centralized institutions.

\section{Conclusion}\label{sec:conclusion}
In this work, we presented a DeFi-native option design tailored specifically to the operational and adversarial constraints of blockchain environments. We noted five fundamental axioms that any option must satisfy to be compatible with DeFi, all of which are satisfied by AmPOs. This is in contrast to most prior DeFi derivative designs which attempt to replicate centralized exchange constructions, failing these core axioms by requiring, e.g., high-frequency oracles. With a focus on AmPOs, we proposed a decentralized, peer-to-pool primary market framework for the underwriting and trading of these instruments.

Following the composability axiom, we considered example applications of AmPOs within the broader DeFi ecosystem. In considering these options as a foundational risk primitive, we presented the implications of embedding a hedging instrument into lending protocols (i.e., protective put collateral) and stablecoins (i.e., de-peg insurance). In applying the same primitive throughout DeFi, we found that AmPOs provide a universal layer for mutualizing tail risk across protocols without implementing a centralizing institution. Notably, though we have proposed the idea of the decentralized clearinghouse herein, we believe that such a construction should be the subject of further study.

Within the work, the AmPO market mechanism was presented without quantifying the underwriter (liquidity provider) profitability. Future work can consider, e.g., the loss-versus-rebalancing (LVR)~\cite{milionis2022automated} of the underwriters in this primary market. In comparison to the LVR on spot markets, the LVR for options would compare the liquidity provider profits and losses to that of a replicating strategy in the underlying spot market. 

Finally, while AmPOs provide a perpetual option design that adheres to the presented axioms, they are not the only viable design for on-chain derivatives. Future work can investigate alternative autarkic option designs. In particular, as noted within~\cite{amini2025decentralized}, prediction markets can be used to create derivative contracts; if paired with an underlying AMM-based spot market, the joint AMM-prediction market system could be constructed to be autarkic and, therefore, also satisfy all proposed axioms even as a European option.

\bibliographystyle{apalike}
\bibliography{bibtex}

\begin{thebibliography}{}

\bibitem[Adams et~al., 2021]{adams2021uniswap}
Adams, H., Zinsmeister, N., Salem, M., Keefer, R., and Robinson, D. (2021).
\newblock Uniswap v3 core.
\newblock {\em Uniswap, Tech. Rep.}

\bibitem[Amini et~al., 2025]{amini2025decentralized}
Amini, H., Bichuch, M., and Feinstein, Z. (2025).
\newblock Decentralized prediction markets and sports books.
\newblock {\em Mathematical Finance}.

\bibitem[Angeris and Chitra, 2020]{angeris2020improved}
Angeris, G. and Chitra, T. (2020).
\newblock Improved price oracles: Constant function market makers.
\newblock In {\em Proceedings of the 2nd ACM Conference on Advances in
  Financial Technologies}, pages 80--91.

\bibitem[Angeris et~al., 2023a]{angeris2023geometry}
Angeris, G., Chitra, T., Diamandis, T., Evans, A., and Kulkarni, K. (2023a).
\newblock The geometry of constant function market makers.
\newblock {\em arXiv preprint arXiv:2308.08066}.

\bibitem[Angeris et~al., 2023b]{angeris2023replicating}
Angeris, G., Evans, A., and Chitra, T. (2023b).
\newblock Replicating market makers.
\newblock {\em Digital Finance}, 5(2):367--387.

\bibitem[Bichuch and Feinstein, 2024]{bichuch2024defi}
Bichuch, M. and Feinstein, Z. (2024).
\newblock Defi arbitrage in hedged liquidity tokens.
\newblock {\em arXiv preprint arXiv:2409.11339}.

\bibitem[Bichuch and Feinstein, 2025]{bichuch2022axioms}
Bichuch, M. and Feinstein, Z. (2025).
\newblock Axioms for automated market makers: A mathematical framework in
  fintech and decentralized finance.
\newblock {\em Operations Research}.

\bibitem[Boyd and Vandenberghe, 2004]{boyd2004convex}
Boyd, S. and Vandenberghe, L. (2004).
\newblock {\em Convex optimization}.
\newblock Cambridge university press.

\bibitem[Chen and Pennock, 2012]{chen2012utility}
Chen, Y. and Pennock, D.~M. (2012).
\newblock A utility framework for bounded-loss market makers.
\newblock {\em arXiv preprint arXiv:1206.5252}.

\bibitem[Ciurlia and Caperdoni, 2009]{ciurlia2009note}
Ciurlia, P. and Caperdoni, C. (2009).
\newblock A note on the pricing of perpetual continuous-installment options.
\newblock {\em Mathematical Methods in Economics and Finance}, 4(1):11--26.

\bibitem[Diop et~al., 2024]{diop2024collapse}
Diop, P.~O., Chevallier, J., and Sanhaji, B. (2024).
\newblock Collapse of silicon valley bank and usdc depegging: A machine
  learning experiment.
\newblock {\em FinTech}, 3(4):569--590.

\bibitem[Duffie and Zhu, 2011]{duffie2011does}
Duffie, D. and Zhu, H. (2011).
\newblock Does a central clearing counterparty reduce counterparty risk?
\newblock {\em The Review of Asset Pricing Studies}, 1(1):74--95.

\bibitem[Eskandari et~al., 2021]{eskandari2021sok}
Eskandari, S., Salehi, M., Gu, W.~C., and Clark, J. (2021).
\newblock Sok: Oracles from the ground truth to market manipulation.
\newblock In {\em Proceedings of the 3rd ACM Conference on Advances in
  Financial Technologies}, pages 127--141.

\bibitem[Fateh~Singh et~al., 2025]{fateh2025option}
Fateh~Singh, S., Nekriach, V., Michalopoulos, P., Veneris, A., and Klinck, J.
  (2025).
\newblock Option contracts in the defi ecosystem: Opportunities, solutions, and
  technical challenges.
\newblock {\em International Journal of Network Management}, 35(2):e70005.

\bibitem[Feinstein, 2026]{feinstein2025amortizing}
Feinstein, Z. (2026).
\newblock Amortizing perpetual options.
\newblock {\em arXiv preprint arXiv:2512.06505}.

\bibitem[Gogol et~al., 2024]{gogol2024sok}
Gogol, K., Velner, Y., Kraner, B., and Tessone, C. (2024).
\newblock Sok: Liquid staking tokens (lsts) and emerging trends in restaking.
\newblock {\em arXiv preprint arXiv:2404.00644}.

\bibitem[Gudgeon et~al., 2020]{gudgeon2020defi}
Gudgeon, L., Werner, S., Perez, D., and Knottenbelt, W.~J. (2020).
\newblock Defi protocols for loanable funds: Interest rates, liquidity and
  market efficiency.
\newblock In {\em Proceedings of the 2nd ACM Conference on Advances in
  Financial Technologies}, pages 92--112.

\bibitem[Hanson, 2007]{hanson2007logarithmic}
Hanson, R. (2007).
\newblock Logarithmic market scoring rules for modular combinatorial
  information aggregation.
\newblock {\em The Journal of Prediction Markets}, 1(1):3--15.

\bibitem[Hull, 2018]{hull}
Hull, J.~C. (2018).
\newblock {\em Options, Futures, and Other Derivatives}.
\newblock Pearson, 10th edition.

\bibitem[Kimura, 2009]{kimura2009american}
Kimura, T. (2009).
\newblock American continuous-installment options: Valuation and premium
  decomposition.
\newblock {\em SIAM Journal on Applied Mathematics}, 70(3):803--824.

\bibitem[Kimura, 2010]{kimura2010valuing}
Kimura, T. (2010).
\newblock Valuing continuous-installment options.
\newblock {\em European Journal of Operational Research}, 201(1):222--230.

\bibitem[Kozhan and Viswanath-Natraj, 2021]{kozhan2021decentralized}
Kozhan, R. and Viswanath-Natraj, G. (2021).
\newblock Decentralized stablecoins and collateral risk.
\newblock {\em WBS Finance Group Research Paper Forthcoming}, pages 1--28.

\bibitem[Lambert and Kristensen, 2022]{lambert2022panoptic}
Lambert, G. and Kristensen, J. (2022).
\newblock Panoptic: the perpetual, oracle-free options protocol.
\newblock {\em arXiv preprint arXiv:2204.14232}.

\bibitem[Leshner and Hayes, 2019]{leshner2019compound}
Leshner, R. and Hayes, G. (2019).
\newblock Compound: The money market protocol.
\newblock {\em White Paper}, 93.

\bibitem[Milionis et~al., 2022]{milionis2022automated}
Milionis, J., Moallemi, C.~C., Roughgarden, T., and Zhang, A.~L. (2022).
\newblock Automated market making and loss-versus-rebalancing.
\newblock {\em arXiv preprint arXiv:2208.06046}.

\bibitem[{Nexus Mutual}, 2025]{nexus}
{Nexus Mutual} (2025).
\newblock Don't worry about depegs: Introducing depeg cover from nexus mutual.
\newblock
  \url{https://nexusmutual.io/blog/dont-worry-about-depegs-introducing-depeg-cover-from-nexus-mutual}.

\bibitem[{OECD}, 2022]{oecd2022lessons}
{OECD} (2022).
\newblock Lessons from the crypto winter: {DeFi} versus {CeFi}.
\newblock Technical report, {OECD} Business and Finance Policy Papers.

\bibitem[Pennella et~al., 2025]{pennella2025unified}
Pennella, L., Saggese, P., Pinelli, F., and Galletta, L. (2025).
\newblock A unified framework and comparative study of decentralized finance
  derivatives protocols.
\newblock {\em arXiv preprint arXiv:2512.19113}.

\bibitem[Pirrong, 2011]{pirrong2011economics}
Pirrong, C. (2011).
\newblock The economics of central clearing: theory and practice.

\bibitem[Qin et~al., 2021]{qin2021attacking}
Qin, K., Zhou, L., Livshits, B., and Gervais, A. (2021).
\newblock Attacking the defi ecosystem with flash loans for fun and profit.
\newblock In {\em International conference on financial cryptography and data
  security}, pages 3--32. Springer.

\bibitem[Schlegel et~al., 2022]{schlegel2022axioms}
Schlegel, J.~C., Kwa{\'s}nicki, M., and Mamageishvili, A. (2022).
\newblock Axioms for constant function market makers.
\newblock {\em arXiv preprint arXiv:2210.00048}.

\bibitem[Xu et~al., 2023]{xu2023sok}
Xu, J., Paruch, K., Cousaert, S., and Feng, Y. (2023).
\newblock Sok: Decentralized exchanges (dex) with automated market maker (amm)
  protocols.
\newblock {\em ACM Computing Surveys}, 55(11):1--50.

\bibitem[Zargham et~al., 2020]{zargham2020curved}
Zargham, M., Shorish, J., and Paruch, K. (2020).
\newblock From curved bonding to configuration spaces.
\newblock In {\em 2020 IEEE International Conference on Blockchain and
  Cryptocurrency (ICBC)}, pages 1--3. IEEE.

\end{thebibliography}

\end{document}